\newtheorem{theorem}{Theorem}[section]
\newtheorem{corollary}[theorem]{Corollary}
\newtheorem{proposition}[theorem]{Proposition}
\newtheorem{lemma}[theorem]{Lemma}
\begin{document}

\markboth{ }{}
\title{\bf Equitable coloring of corona products of cubic graphs is harder than ordinary coloring\footnote{This project has been 
partially supported by Narodowe Centrum Nauki under contract 
DEC-2011/02/A/ST6/00201}}
\date{}
\author{Hanna Furmańczyk\footnote{Institute of Informatics,\ University of Gdańsk,\ Wita Stwosza 57, \ 80-952 Gdańsk, \ Poland. \ 
e-mail: hanna@inf.ug.edu.pl},  \ Marek Kubale \footnote{Department of Algorithms and System Modelling,\ Technical University of Gdańsk,\ Narutowicza 11/12, \ 80-233 Gdańsk, \ Poland. \ 
e-mail: kubale@eti.pg.gda.pl}
}

\markboth{H. Furmańczyk, M. Kubale}{Equitable Colorings of Corona Products of Cubic Graphs ...}

\maketitle

\begin{abstract}
A graph is equitably $k$-colorable if its vertices can be partitioned into $k$ independent 
sets in such a way that the number of vertices in any two sets differ by at most one. 
The smallest $k$ for which such a coloring exists is known as the \emph{equitable chromatic 
number} of $G$ and it is denoted by $\chi _{=}(G)$. In this paper the problem of determinig $\chi_=$ for coronas of cubic graphs is studied. Although the problem of ordinary coloring of coronas of cubic graphs is solvable in polynomial time, the problem of equitable coloring becomes NP-hard for these graphs. We provide polynomially solvable cases of coronas of cubic graphs and prove the NP-hardness in a general case. As a by-product we obtain a simple linear time algorithm for equitable coloring of such graphs which uses $\chi_=(G)$ or $\chi_=(G)+1$ colors. Our algorithm is best possible, unless $P=NP$. Consequently, cubical coronas seem to be the only known class of graphs for which equitable coloring is harder than ordinary coloring.
\end{abstract}

{\bf Keywords:} {corona graph, cubic graph, equitable chromatic number, equitable graph coloring, NP-hardness, polynomial algorithm.}

\section{Introduction}
All graphs considered in this paper are connected, finite and simple, i.e. undirected, loopless and without multiple edges, unless otherwise is stated.

If the set of vertices of a graph $G$ can be partitioned into $k$ (possibly empty) classes $V_1,V_2,....,V_k$ such that each $V_i$ is an independent set and the condition 
$\big||V_i|-|V_j|\big|\leq 1$ holds for every pair ($i, j$), then $G$ is said to be {\it equitably k-colorable}. If $|V_i|=l$ for every $i=1,2,\ldots,k$, then $G$ on $n=kl$ vertices is said to be \emph{strong equitably $k$-colorable}.
The smallest integer $k$ for which $G$ is equitably $k$-colorable is known as the {\it equitable chromatic number} of $G$ and it is denoted by $ \chi_{=}(G)$ \cite{meyer}.
Since equitable coloring is a proper coloring with an additional constraint, we have $\chi(G) \leq \chi_=(G)$ for any graph $G$.

The notion of equitable colorability was introduced by Meyer \cite{meyer}. 
However, an earlier work of Hajnal and Szemer\'edi \cite{hfs:haj} showed that 
a graph $G$ with maximal degree $\Delta$ is equitably $k$-colorable 
if $k\geq\Delta+1$. Recently, Kierstead et al. \cite{fast} have given an $O(\Delta n^2)$-time algorithm for obtaining a $(\Delta+1)$-coloring of a graph $G$ on $n$ vertices.

\par This model of graph coloring has many practical applications. Every time when we have to divide a system with binary conflict relations into equal or almost equal conflict-free subsystems we can model this situation by means of equitable graph coloring. 
In particular, one motivation for equitable coloring suggested by Meyer \cite{meyer} concerns scheduling problems. In this application, the vertices of a graph represent a collection of tasks to be performed and an edge connects two tasks that should not be performed at the same time. A coloring of this graph represents a partition of tasks into subsets that may be performed simultaneously. Due to load balancing considerations, it is desirable to perform equal or nearly-equal numbers of tasks in each time slot, and this balancing is exactly what equitable colorings achieve. Furmańczyk \cite{furm} mentions a specific application of this type of scheduling problem, namely, assigning university courses to time slots in a way that avoids scheduling incompatible courses at the same time and spreads the courses evenly among the available time slots.

The topic of equitable coloring was widely discussed in literature. It was considered for some particular graph classes and also for several graph products: cartesian, weak or strong tensor products \cite{prod, furm} as well as for coronas \cite{hf, kaliraj}.

The \emph{corona} of two graphs $G$ and $H$ is the graph $G \circ H$ obtained by taking one copy of $G$, called the \emph{center graph}, $|V(G)|$ copies of $H$, named the 
\emph{outer graph}, and making the $i$-th vertex of $G$ adjacent to every vertex in the $i$-th copy of $H$. Such type of graph products was introduced by Frucht and Harary in 1970 \cite{frucht} (for an example see Fig. \ref{rysex}).

In general, the problem of optimal equitable coloring, in the sense of the number of colors used, is NP-hard and remains so for corona products of graphs. In fact, Furmańczyk et al. \cite{hf} proved that the problem of deciding whether $\chi _{=}(G \circ K_2) \leq 3$ is NP-complete even if $G$ is restricted to the line graph of a cubic graph.

Let us recall some basic facts concerning cubic graphs. It is well known from Brook's theorem \cite{brooks} that for any cubic graph $G \neq K_4$, we have $\chi(G) \leq 3$. 
On the other hand, Chen et al. \cite{clcub} proved that for any cubic graph with $\chi(G)=3$, its equitable chromatic number equals 3 as well. Moreover, since a connected cubic 
graph $G$ with $\chi(G)=2$ is a bipartite graph with partition sets of equal size, we have the equivalence of the classical and equitable chromatic numbers for 2-chromatic 
cubic graphs. Since the only cubic graph for which the chromatic number is equal to 4 is the complete graph $K_4$, we have
\begin{equation}2 \leq \chi_=(G) =\chi(G)\leq 4,\label{brooks}\end{equation}
for any cubic graph $G$.

In the paper we will consider the equitable coloring of coronas. We assume that in corona $G \circ H$, $|V(G)|=n$ and $|V(H)|=m$. A vertex with color $i$ is called an $i$-\emph{vertex}. We use color 4 instead of 0, in all colorings in the paper, including cases when color label is implied by an expresion $(\bmod 4)$. 

Let
\begin{itemize}
\item $Q_2$ denote the class of equitably $2$-chromatic cubic graphs,
\item $Q_3$ denote the class of equitably $3$-chromatic cubic graphs,
\item $Q_4$ denote the class of equitably $4$-chromatic cubic graphs. 

Clearly, $Q_4=\{K_4\}$.
\end{itemize}

Next, let $Q_2(t) \subset Q_2$ ($Q_3(t) \subset Q_3$) denote the class of bipartite (tripartite) cubic graphs with partition sets of cardinality $t$, and let $Q_3(u,v,w) \subset Q_3$ denote the class of 
3-partite graphs with color classes of cardinalities $u$, $v$ and $w$, respectively, where $u \geq v \geq w \geq u-1$. Observe that 
\begin{equation}
\chi(K_4 \circ H) = \left \{
\begin{array}{ll}
4 & \text{if } H \in Q_2,\\  
\chi(H)+1 & \text{otherwise.}
\end{array}\right.
\end{equation}

In the next section we show a way to color $G\circ H$ with 3 colors provided that the corona admits such a coloring. Next, in Section \ref{5eq} we give a linear-time procedure for coloring 
corona products of cubic graphs with 5 colors. It turns out that this number of colors is sufficent for equitable coloring of any corona of cubic graphs, but in some cases less 
than 5 colors suffice. In Section \ref{NP} we give our main result that deciding whether $G \circ H$ is equitably 4-colorable is NP-complete when $H \in Q_3(t)$ and 10 divides $t$, in symbols 
$10 | t$. Hence, our 5-coloring algorithm of Section \ref{5eq} is 1-absolute approximate and the problem of equitable coloring of cubical coronas belongs to very few NP-hard problems 
that have approximation algorithms of this kind. 
Most of our results are summarized in Table~\ref{tabela1}.

\begin{table}[htb]
\begin{center}
\begin{tabular}{|c|*{3}{c|}}\hline

\backslashbox[20mm]{$G$}{$H$} & $Q_2$ & $Q_3$ & $Q_4$\\ \hline
$Q_2$ & 3 or 4 \scriptsize{[Thm. \ref{3-4}]} & 4 or 5$^*$ \scriptsize{[Thms. \ref{2_3_5}, \ref{twNP}]} &5 \scriptsize{[Thm. \ref{k4}]}\\ \hline
$Q_3$ & 3 or 4 \scriptsize{[Thm. \ref{3-4}]} & 4 or 5$^*$ \scriptsize{[Thm. \ref{Cub3_xxx}, Col. \ref{NP2}]} &5 \scriptsize{[Thm. \ref{k4}]}\\ \hline
$Q_4$ & 4 \scriptsize{[Thm. \ref{3-4}]} & 4 & 5 \scriptsize{[Thm. \ref{k4}]}\\\hline
\end{tabular}

\vspace{3mm}
%\hspace{0.5cm} $^1:$ we have determined all the cases when $\chi_==3$ or $\chi_==4$,\\
\caption{Possible values of $\chi_=(G \circ H)$, where $G$ and $H$ are cubic graphs. Asterix ($^*$) means that deciding this case is NP-complete.}\label{tabela1}
\end{center}
\end{table}

To the best of our knowledge, cubical coronas are so far the only class of graphs for which equitable coloring is harder than 
ordinary coloring. And, since $\chi_=(G \circ H) \leq 5$ and $\Delta(G \circ 
H) \geq 7$, our results confirm Meyer's Equitable Coloring Conjecture \cite{meyer}, which claims that for any connected graph $G$, other than a complete graph or an odd cycle, 
we have $\chi_=(G) \leq \Delta$.

\section{Equitable 3-coloring of corona of cubic graphs} \label{sec3-4}
First, let us recall a result concerning coronas $G \circ H$, where $H$ is a 2- or 3-partite graph.

\begin{theorem}[\cite{hf}]
Let $G$ be an equitably $k$-colorable graph on $n \geq k$  vertices and let $H$ be a $(k-1)$-partite graph. If $k|n$, then
$$\chi_=(G \circ H) \leq k.$$ \label{bip}
\end{theorem}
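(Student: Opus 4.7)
The plan is to exhibit a concrete equitable $k$-coloring of $G\circ H$ built by stitching together an equitable $k$-coloring of $G$ with proper $(k-1)$-colorings of the copies of $H$, where the colors used on each copy are chosen by a cyclic rule so the final color classes come out exactly balanced.

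First I would fix an equitable $k$-coloring of $G$ with classes $V_1,\dots,V_k$. Since $k\mid n$ and the coloring is equitable, each class satisfies $|V_i|=n/k$. Then I would fix a proper $(k-1)$-coloring of $H$ with classes $H_1,\dots,H_{k-1}$ of sizes $h_1,\dots,h_{k-1}$, where $\sum_j h_j=m$. For a vertex $v$ of $G$ lying in $V_i$, call $H_v$ the corresponding copy of $H$ in the corona. Since $v$ carries color $i$, the vertices of $H_v$ must avoid color $i$, and the $(k-1)$ remaining colors in $\{1,\dots,k\}\setminus\{i\}$ are exactly enough to properly color $H_v$.

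The key step is the choice of which remaining color receives which class $H_j$. I would use a cyclic shift: for $v\in V_i$, color the vertices of the $j$-th class of $H_v$ with color $(i+j)\bmod k$ (reading $0$ as $k$). This is proper, and as $j$ runs through $1,\dots,k-1$ the label $(i+j)\bmod k$ runs through $\{1,\dots,k\}\setminus\{i\}$, so color $i$ is indeed avoided on $H_v$. Now I would count, for each color $c\in\{1,\dots,k\}$, the total number of $c$-vertices in $G\circ H$. From $G$ we get $|V_c|=n/k$. From the copies, a copy $H_v$ with $v\in V_i$ contributes exactly $h_{(c-i)\bmod k}$ vertices of color $c$ whenever $i\neq c$, and nothing when $i=c$. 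Summing over $i\neq c$ and using $|V_i|=n/k$ gives
\[
\frac{n}{k}\sum_{i\neq c} h_{(c-i)\bmod k}=\frac{n}{k}\sum_{j=1}^{k-1}h_j=\frac{nm}{k},
\]
so every color class has total size $\tfrac{n}{k}(m+1)$, independent of $c$.

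The main obstacle is essentially bookkeeping rather than conceptual: one has to be sure that the cyclic shift really hits each residue class mod $k$ exactly once as $i$ varies over $\{1,\dots,k\}\setminus\{c\}$, so that the inner sum collapses to $\sum_j h_j=m$. Once that is verified, the resulting coloring is in fact strong equitable, which is stronger than what the theorem claims, and thus $\chi_=(G\circ H)\leq k$.
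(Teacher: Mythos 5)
Your construction is correct and complete, and it is essentially the standard argument: the paper quotes this theorem from \cite{hf} without reproving it, but the cyclic-shift rule you use (coloring the $j$-th class of the copy attached to an $i$-vertex with color $(i+j)\bmod k$) is exactly the technique the paper itself deploys in its later proofs (e.g.\ Theorems \ref{2_3_5} and \ref{Cub3_xxx}), and your counting correctly shows every color class has size $\tfrac{n}{k}(m+1)$. No gaps.
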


\begin{proposition}
If $G$ and $H$ are cubic graphs, then
$\chi_=(G \circ H) =3$ if and only if $G \in Q_2 \cup Q_3$, $H \in Q_2$, and $G$ has a strong equitable 3-coloring.\label{3col}
\end{proposition}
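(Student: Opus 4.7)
The plan is to prove the two implications separately. For the \emph{if} direction I would start from a strong equitable $3$-coloring $c:V(G)\to\{1,2,3\}$ of $G$ (each class of size $n/3$) together with the bipartition $(A,B)$ of $H$ with $|A|=|B|=m/2$, which exists because $H$ is a connected cubic bipartite graph. I would extend $c$ to $G\circ H$ by coloring, in each copy $H_v$, the two sides of its bipartition with the two colors of $\{1,2,3\}\setminus\{c(v)\}$. The coloring is proper, and a direct count shows each color appears on exactly $n/3+(2n/3)(m/2)=n(m+1)/3$ vertices of $G\circ H$, so the resulting coloring is in fact strongly equitable. Because $G\circ H$ contains a triangle (take any $v\in V(G)$ together with two adjacent vertices of $H_v$), we have $\chi_=(G\circ H)\ge 3$, and equality $\chi_=(G\circ H)=3$ follows.

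For the \emph{only if} direction I would start from an equitable $3$-coloring $f$ of $G\circ H$. Restricting $f$ to the central copy of $G$ gives a proper $3$-coloring, so $\chi(G)\le 3$ and $G\in Q_2\cup Q_3$. For each $v\in V(G)$, the copy $H_v$ is forced to use only the two colors of $\{1,2,3\}\setminus\{f(v)\}$, so $H_v$, and hence $H$, is bipartite; a cubic bipartite graph lies in $Q_2$.

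The most delicate step will be proving that the restriction of $f$ to $V(G)$ is \emph{strongly} equitable. Let $a_i$ denote the number of vertices of $G$ coloured $i$ by $f$. Since $H$ is connected cubic bipartite, the only proper $2$-coloring of $H_v$ (up to swap) is its bipartition, so each copy contributes exactly $t:=m/2$ vertices to each of the two colors in $\{1,2,3\}\setminus\{f(v)\}$. Summing over $v$, the number of $i$-vertices in $G\circ H$ equals $a_i+t(n-a_i)$, so equitability of $f$ yields $(t-1)\,|a_i-a_j|\le 1$ for every pair $i,j$. Because $H$ is cubic bipartite we have $t\ge 3$, hence $|a_i-a_j|\le 1/(t-1)<1$; being a non-negative integer, it must vanish. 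Therefore $a_1=a_2=a_3=n/3$, giving the desired strong equitable $3$-coloring of $G$.

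The main obstacle is conceptual rather than computational: one has to recognize that the rigidity of the bipartition in each connected cubic copy $H_v$ turns the count of $i$-vertices of $G\circ H$ into an affine function of $a_i$ with nonzero slope $1-t$, and this is exactly what lifts the loose inequality $\big||V_i|-|V_j|\big|\le 1$ on $G\circ H$ into the sharp equalities $a_1=a_2=a_3$ on $G$.
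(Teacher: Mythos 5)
Your proposal is correct and follows essentially the same route as the paper: the forward implication rests on the same key observation that the rigid bipartition of each connected copy $H_v$ makes the color-class sizes of $G\circ H$ affine in the $a_i$ with slope $1-t$, forcing $a_1=a_2=a_3$, while your backward implication simply inlines the construction that the paper delegates to its Theorem~\ref{bip}. A minor point in your favor: you correctly include the $a_i$ center vertices in the count $a_i+t(n-a_i)$, whereas the paper's displayed sequence $((n_2+n_3)t,(n_1+n_3)t,(n_1+n_2)t)$ omits them (harmlessly, since the conclusion is unchanged).
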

\begin{proof}
\noindent $(\Leftarrow)$ Since $G$ is strong equitably 3-colorable, the cardinality of its vertex set must be divisible by 3. The thesis follows now from Theorem \ref{bip}.

\noindent $(\Rightarrow)$ Assume that $\chi_=(G \circ H) =3$. This implies:
\begin{itemize}
\item $H$ must be 2-chromatic, and due to (\ref{brooks}) it must be also equitably 2-chromatic,
\item $G$ must be 3-colorable (not necessarily equitably), $\chi(G) \leq \chi_=(G)\leq 3$, which implies $G \in Q_2 \cup Q_3$.
\end{itemize}
Otherwise, we would have $\chi(G \circ H)  \geq 4$ which is a contradiction. 

Since $H\in Q_2$ is connected, its bipartition is determined. Let $H\in Q_2(t)$, $t\geq 3$. Observe that every 3-coloring of $G$ determines a 3-partition of 
$G \circ H$. Let us consider any 3-coloring of $G$ with color classes of cardinality $n_1, n_2$ and $n_3$, respecively, where $n=n_1+n_2+n_3$. Then the cardinalities of 
color classes in the implied 3-coloring of $G \circ H$ form a sequence $((n_2+n_3)t, (n_1+n_3)t, (n_1+n_2)t)$. Such a 3-coloring of $G \circ H$ is equitable if and 
only if $n_1=n_2=n_3$. This means that $G$ must have a strong equitable 3-coloring, which, keeping in mind that $\chi_=(G\circ H) \geq 3$ for all cubic graphs $G$ and $H$, completes the proof.
\end{proof}

In the remaining cases of coronas $G \circ H$, where $H \in Q_2$, we have to use more than three colors. However, it turns out that in all such cases four colors suffice.

\begin{theorem}
If $G$ is a cubic graph, $H \in Q_2$, then
$$\chi_=(G \circ H)=
\left \{
\begin{array}{ll}
3 & \text{if } G\in Q_2(s) \cup Q_3, 3 | s \text{ and } G \text{ is equitably 3-colorable},\\
4 & \text{otherwise.}
\end{array}
\right.
$$\label{3-4} 
\end{theorem}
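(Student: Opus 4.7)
The plan is to establish both equalities via Proposition~\ref{3col} together with an explicit equitable $4$-coloring of $G\circ H$.

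For the first branch ($\chi_=(G\circ H)=3$), I would verify that the stated hypotheses guarantee a \emph{strong} equitable $3$-coloring of $G$. Indeed, if $G\in Q_2(s)$ with $3\mid s$ and $G$ is equitably $3$-colorable, then $|V(G)|=2s$ is divisible by $3$, forcing the equitable $3$-coloring to have three classes of equal size $2s/3$; the analogous argument for $G\in Q_3$ with $3\mid n$ also produces a strong equitable $3$-coloring. Proposition~\ref{3col} then yields $\chi_=(G\circ H)=3$.

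For the second branch, the lower bound $\chi_=(G\circ H)\ge 4$ is immediate: the failure of the first-branch hypotheses implies that $G$ has no strong equitable $3$-coloring, so by Proposition~\ref{3col} we cannot have $\chi_=(G\circ H)=3$, while each vertex $v$ of $G$ together with its copy $H_v$ induces a join $K_1+H_v$, giving $\chi(G\circ H)\ge\chi(H)+1=3$. For the matching upper bound, I apply the Hajnal--Szemer\'edi theorem with $\Delta(G)=3$: $G$ admits an equitable $4$-coloring with classes $V_1,V_2,V_3,V_4$ of sizes in $\{\lfloor n/4\rfloor,\lceil n/4\rceil\}$. Writing $H\in Q_2(t)$ and letting $X_v\cup Y_v$ be the bipartition of the copy $H_v$, for each $v$ with color $i=c(v)$ I would color $X_v$ with $i+1\pmod 4$ and $Y_v$ with $i-1\pmod 4$ (taking label $4$ instead of $0$). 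This is a proper coloring of $G\circ H$ since the two neighbors' colors $i\pm 1$ differ from $i$ and from each other, and a direct count shows that color class $j$ has size $|V_j|+t(|V_{j-1}|+|V_{j+1}|)$.

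The subtle step---and the main obstacle---is the labeling of the classes of $G$ when $n\equiv 2\pmod 4$. When $4\mid n$, all $|V_i|=n/4$ and the construction gives a strong equitable $4$-coloring with common class size $n(1+2t)/4$. When $n\equiv 2\pmod 4$, two classes have size $(n+2)/4$ and two have size $(n-2)/4$; I would relabel so that the two larger classes receive labels of \emph{opposite} parity in $\mathbb{Z}_4$ (for instance $\{1,2\}$, not $\{1,3\}$). With this choice, $|V_{j-1}|+|V_{j+1}|$ equals the same value $(n-2)/2$ for every $j$, so the class sizes in $G\circ H$ differ only by the original differences $|V_j|-|V_{j'}|\in\{-1,0,1\}$, and the coloring is equitable. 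Any parity-inconsistent labeling would instead yield class sizes differing by $2t-1\ge 5$, completely failing equitability---this is the place where a naive cyclic assignment breaks, and it is why the labeling of $G$ must be chosen with care.
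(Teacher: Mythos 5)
Your proof is correct, and for the main case it takes a genuinely different route from the paper's. The paper dismisses $G\in Q_2\cup Q_4$ as easy (extend an equitable $4$-coloring of the center graph, no details) and spends the whole proof on $G\in Q_3$: there it starts from an equitable \emph{3}-coloring of $G$, introduces color $4$ only in three designated copies of $H$ attached to a $1$-, a $2$- and a $3$-vertex (with carefully computed multiplicities involving $\lceil(4k-i)/3\rceil-k$), and then balances the remaining copies by coloring each with two of its three admissible colors so that the large deficit of color $4$ is made up; this forces separate bookkeeping for $n=4k$ and $n=4k+2$. You instead invoke Hajnal--Szemer\'edi to equitably $4$-color the center graph uniformly for $Q_2$, $Q_3$ and $K_4$, and propagate by the cyclic rule $X_v\mapsto c(v)+1$, $Y_v\mapsto c(v)-1$, correctly isolating the one delicate point: when $n\equiv 2\pmod 4$ the two larger classes of $G$ must get labels of opposite parity in $\mathbb{Z}_4$ so that $|V_{j-1}|+|V_{j+1}|$ is constant in $j$. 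Your argument is shorter and more uniform; the paper's buys an explicit demonstration that color $4$ can be confined to very few copies, in the spirit of its later recoloring proofs. Your treatment of the $3$-color branch and of the lower bound (via Proposition~\ref{3col} plus $\chi(G\circ H)\ge\chi(H)+1=3$) matches the paper. One trivial arithmetic slip: the common value of $|V_{j-1}|+|V_{j+1}|$ under your relabeling is $n/2$, not $(n-2)/2$; this does not affect the argument, since only its independence of $j$ is used.
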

\begin{proof}
Due to Proposition \ref{3col}, we only have to define an equitable 4-coloring of $G \circ H$. The cases of $G \in Q_2 \cup Q_4$ are easy. We start from an equitable 4-coloring of the center graph and extend it to the corona. 

Let us assume that $G \in Q_3$. First, we color equitably $G$ with 3 colors and then extend this coloring to equitable 4-coloring of $G\circ H$, $H=H(U,V)\in Q_2(t)$. 
Since the number of vertices of cubic graph $G$ is even, we have to consider two cases.
\begin{description}
\item[\textnormal{\emph{Case} 1:}] $n=4k$, for some $k\geq 2$.
 
Since $G$ is equitably 3-colorable, the color classes of equitable 3-coloring of $G$ are of cardinalities $\lceil 4k/3\rceil, \lceil (4k-1)/3\rceil$ and $\lceil(4k-2)/3\rceil$, 
respectively. And, since $|V(G \circ H)|=4k(2t+1)$, in every equitable 4-coloring of $G \circ H$ each color class must be of cardinality $2kt+k$. 

We extend our 3-coloring of $G$ to $G \circ H$ as follows (see Fig. \ref{rysex}a)). We color:
\begin{itemize}
\item the vertices in one copy of $H$ linked to a 1-vertex in $G$ using $t$ times color 3 (vertices in partition $U$), 
$t-(\lceil (4k-1)/3\rceil -k)$ times color 2 and $\lceil (4k-1)/3 \rceil -k$ times color 4 (vertices in partition $V$),
\item the vertices in one copy of $H$ linked to a 2-vertex in $G$ using $t$ times color 1 (vertices in partition $U$), 
$t-(\lceil (4k-2)/3 \rceil - k)$ times color 3 and $\lceil (4k-2)/3 \rceil - k$ times color 4 (vertices in partition $V$),
\item  the vertices in one copy of $H$ linked to a 3-vertex in $G$ using $t$ times color 2 (vertices in partition $U$), 
$t-(\lceil 4k/3 \rceil - k)$ times color 1 and $\lceil 4k/3 \rceil - k$ times color 4 (vertices in partition $V$).
\end{itemize}

\begin{figure}[htb]
\begin{center}
\includegraphics[scale=1]{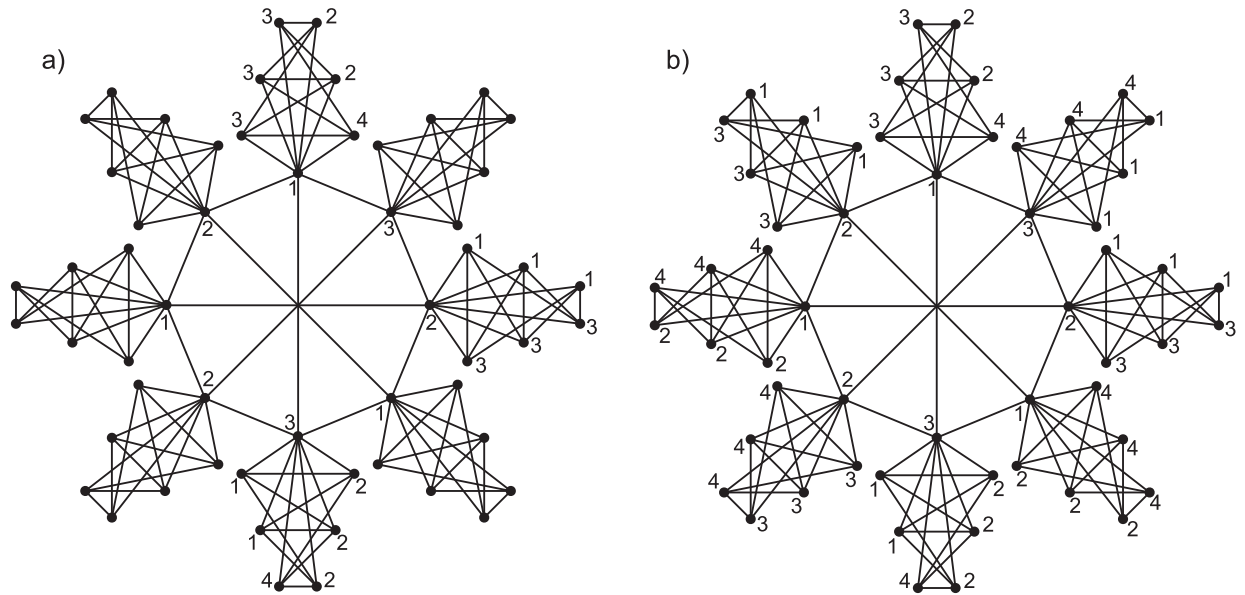} 
%\vspace{-4cm}
\caption{An example of coloring of $W \circ K_{3,3}$, where $W$ is the Wagner graph ($C_8$ with 4 diagonals): a) partial 4-coloring; b) equitable 4-coloring.}
\label{rysex}
\end{center}
\end{figure}

So far, colors 1, 2 and 3 have been used $2t+k$ times, while color 4 has been used $k$ times. 

Now, we color each of uncolored copy of $H$ with two out of three allowed colors in such a way that in this step colors 1, 2 and 3 are used $(2k-2)t$ times and color 4 is 
used $2kt$ times, which results in an equitable 4-coloring of the whole corona $G \circ H$ (see Fig. \ref{rysex}b)). 

\item[\textnormal{\emph{Case} 2:}] $n=4k+2$, for some $k\geq 1$.

Since $G$ is equitably 3-colorable, its color classes are of cardinalities $\lceil (4k+2)/3\rceil$, $\lceil (4k+1)/3\rceil$ and $\lceil 4k/3\rceil$, respectively, in any 
equitable coloring of $G$. Since $|V(G \circ H)|=(4k+2)(2t+1)=8kt+4t+4k+2$, in every equitable 4-coloring the color classes must be of cardinality $2kt+t+k$ or 
$2kt+t+k+1$.

We color:
\begin{itemize}
\item the vertices in one copy of $H$ linked to a 1-vertex of $G$ using $t$ times color 3 (vertices in partition $U$), $t-(\lceil (4k+1)/3\rceil -k-1)$ times color 2 and 
$\lceil (4k+1)/3 \rceil -k-1$ times color 4 (vertices in partition $V$),
\item the vertices in one copy of $H$ linked to a 2-vertex of $G$ using $t$ times color 1 (vertices in partition $U$), $t-(\lceil 4k/3 \rceil - k)$ times color 3 and 
$\lceil 4k/3 \rceil - k$ times color 4 (vertices in partition $V$),
\item  the vertices in one copy of $H$ linked to a 3-vertex of $G$ using $t$ times color 2 (vertices in partition $U$), $t-(\lceil (4k+2)/3 \rceil - k-1)$ times color 1 
and $\lceil (4k+2)/3 \rceil - k-1$ times color 4 (vertices in partition $V$).
\end{itemize}

So far, colors 1 and 2 have been used $2t+k+1$ times, while color 3 has been used $2t+k$ times and color 4 has been used $k$ times. 

Finally, we color still uncolored copies of $H$ with two (out of three) allowed colors so that colors 1, 2 and 3 are used $(2k-1)t$ times and color 4 is 
used $2kt$ times, which results in an equitable 4-colorings\ of the whole corona $G \circ H$. 
\end{description}
\end{proof}
\section{Equitable 5-coloring of coronas of cubic graphs} \label{5eq}

We start by considering cases when 5 colors are necessary for such graphs to be colored equitably.

\begin {proposition} [\cite{hf}]
If $G$ is a graph with $\chi\left(G\right)\leq m+1$, then  $\chi_{=}(G \circ K_m)= m+1$. \label{complete}
\end{proposition}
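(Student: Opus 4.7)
The plan is to establish both inequalities $\chi_=(G \circ K_m) \geq m+1$ and $\chi_=(G \circ K_m) \leq m+1$ by a short clique argument and an explicit construction, respectively.

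For the lower bound, I would observe that any vertex $v$ of $G$ together with its attached copy of $K_m$ induces a clique on $m+1$ vertices in $G \circ K_m$, since $v$ is joined to every vertex of its $K_m$ and the $K_m$ is already complete. Therefore $\omega(G \circ K_m) \geq m+1$, which gives $\chi_=(G \circ K_m) \geq \chi(G \circ K_m) \geq m+1$.

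For the upper bound I would exhibit a strong equitable $(m+1)$-coloring. Starting from a proper $(m+1)$-coloring of $G$, guaranteed by the hypothesis $\chi(G) \leq m+1$, I extend it to $G \circ K_m$ as follows: for each vertex $v$ of $G$ bearing color $c \in \{1,\ldots,m+1\}$, assign to the $m$ vertices of the copy of $K_m$ attached to $v$ the $m$ colors of $\{1,\ldots,m+1\}\setminus\{c\}$, each used exactly once. This assignment is proper on the $K_m$ (it uses $m$ distinct colors on $m$ vertices), it respects the edges between $v$ and its $K_m$ (color $c$ is avoided), and it respects the edges of $G$ (we did not change the coloring of $G$).

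The key observation is that, by construction, the ``unit'' formed by $v$ and its attached $K_m$ contributes exactly one vertex of each color $1,\ldots,m+1$. Summing over the $n$ units, every color class has cardinality exactly $n$, so the coloring is strong equitable and in particular equitable with $m+1$ colors. Combining the two bounds yields $\chi_=(G\circ K_m) = m+1$. There is no real obstacle here; the only subtle point is noticing that each vertex of $G$ together with its $K_m$-copy forms a $K_{m+1}$, which simultaneously forces the lower bound and dictates the structure of any $(m+1)$-coloring, making the construction essentially canonical.
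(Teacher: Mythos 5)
Your proof is correct. The paper states Proposition~\ref{complete} as a result imported from \cite{hf} without reproducing its proof, and your argument --- the clique $K_{m+1}$ formed by each vertex of $G$ with its attached copy of $K_m$ for the lower bound, and the extension of a proper $(m+1)$-coloring of $G$ in which each copy of $K_m$ receives the $m$ colors missing at its center vertex, making every color class of size exactly $n$ --- is precisely the standard argument for this fact.
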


This proposition immediately implies

\begin{corollary}
If $G$ is a cubic graph, then $$\chi_=(G \circ K_4)=5.$$ \label{k4}
\end{corollary}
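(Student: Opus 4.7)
The plan is to apply Proposition~\ref{complete} directly with $m=4$, since the corollary is precisely the instantiation of that proposition when the outer graph is $K_4$. The hypothesis required is $\chi(G)\leq m+1=5$, and for any cubic graph $G$ this follows immediately from Brooks' theorem (equivalently from inequality~(\ref{brooks})): indeed $\chi(G)\leq 4<5$. Invoking the proposition then yields $\chi_=(G\circ K_4)=m+1=5$, which is exactly the statement to be proved.

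There is essentially no obstacle here: once Proposition~\ref{complete} is in hand, the corollary reduces to a single-line verification of its hypothesis, and no separate case analysis on $G\in Q_2\cup Q_3\cup Q_4$ is needed. It is nonetheless worth recording informally why five colors are both necessary and sufficient. Necessity is structural: each vertex $v$ of $G$ together with its attached copy of $K_4$ induces a $K_5$, which forces $\chi_=(G\circ K_4)\geq \chi(G\circ K_4)\geq 5$ regardless of the particular cubic graph chosen. Sufficiency is what Proposition~\ref{complete} packages: one equitably uses five colors on $G$, then colors each attached $K_4$ with the four colors distinct from that of its linked center vertex, and distributes these choices across the $n$ copies so that the resulting class sizes differ by at most one. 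Since all the work is absorbed by Proposition~\ref{complete}, the proof of the corollary consists solely in observing that cubic graphs satisfy the chromatic bound needed.
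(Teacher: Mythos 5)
Your proof is correct and matches the paper exactly: the paper derives Corollary~\ref{k4} as an immediate consequence of Proposition~\ref{complete} with $m=4$, the hypothesis $\chi(G)\leq 5$ being supplied by Brooks' theorem for cubic graphs. Your additional informal remarks on necessity (the induced $K_5$) and sufficiency are accurate but not needed.
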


It turns out that 5 colors may be required also in some coronas $G \circ H$, where $G \in Q_2 \cup Q_3$ and $H \in Q_3$.

\begin{theorem}
If $G \in Q_2(s)$ and $H \in Q_3$, then $$4\leq \chi_=(G\circ H) \leq 5.$$ \label{2_3_5}
\end{theorem}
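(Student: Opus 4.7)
The plan is to prove the two bounds separately. The lower bound $\chi_=(G \circ H) \geq 4$ is immediate: since $H \in Q_3$ has $\chi(H) = 3$ and each center vertex $v \in V(G)$ is adjacent in $G \circ H$ to every vertex of its copy $H_v$, the induced subgraph on $\{v\} \cup V(H_v)$ has chromatic number $\chi(H) + 1 = 4$. Hence $\chi(G \circ H) \geq 4$ and consequently $\chi_=(G \circ H) \geq 4$.

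For the upper bound I construct an explicit equitable 5-coloring. Since $G \in Q_2(s)$ is connected, its bipartition $V(G) = U \cup W$ is unique with $|U| = |W| = s$; assign color 1 to $U$ and color 2 to $W$. By the result of Chen et al.\ \cite{clcub}, $H$ admits an equitable 3-coloring with classes $A, B, C$ of sizes $a \geq b \geq c$, $a - c \leq 1$. Each of the $2s$ copies of $H$ must then be colored using only four of the five colors, namely those different from the color of its attached center vertex, so the three classes of that copy must be mapped to three of the four available colors. The total number of vertices to be balanced is $N = 2s(m+1)$, and the target size of every color class in an equitable 5-coloring is $\lfloor N/5 \rfloor$ or $\lceil N/5 \rceil$.

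The balancing idea is to compensate for the $s$-vertex head start of colors 1 and 2 over colors 3, 4, 5: a carefully chosen number of copies attached to 1-vertices will dedicate one of their classes (of size $a$, $b$, or $c$) to color 2, and symmetrically some copies attached to 2-vertices will dedicate one class to color 1; the remaining two classes in every such copy, together with all three classes of each copy carrying no such bonus assignment, are distributed among colors 3, 4, 5 in a rotating fashion to keep those three totals within one of each other. Solving the resulting linear balance equations determines the required number of bonus copies up to at most a few units, and the freedom to choose both which class ($A$, $B$, or $C$) is reassigned and which color of $\{3,4,5\}$ is omitted in each copy supplies the slack to absorb any residual mismatch.

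The main obstacle will be a short but careful case analysis on $m \bmod 3$ (which governs the multiset $\{a,b,c\}$) and on $2s(m+1) \bmod 5$ (which governs the gap between $\lfloor N/5 \rfloor$ and $\lceil N/5 \rceil$), verifying in each case that the prescribed mixture of bonus and non-bonus assignments can simultaneously be realized. Once the finitely many residue classes are checked, the construction produces an equitable 5-coloring of $G \circ H$, finishing the proof.
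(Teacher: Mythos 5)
Your lower bound argument is correct and matches the paper's. The upper bound construction, however, has a fatal structural flaw. By coloring the center graph $G$ with only colors 1 and 2, you cap the totals of those two colors too low to ever reach $N/5$, where $N=2s(m+1)$. Color 1 can receive vertices only from the $s$ vertices of one side of $G$ and from the $s$ copies of $H$ attached to $2$-vertices; in each such copy it receives at most one of the three classes, hence at most $\lceil m/3\rceil$ vertices (the union of two color classes of a $3$-chromatic graph is never independent, so a copy cannot dedicate two classes to the same color). Therefore color 1 ends with at most $s+s\lceil m/3\rceil$ vertices, while an equitable 5-coloring requires at least $\lfloor 2s(m+1)/5\rfloor$. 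The inequality $s+sm/3\geq 2s(m+1)/5$ forces $m\leq 9$, so for every $H\in Q_3$ with $m\geq 12$ the scheme must fail; concretely, for $m=12$ and $s=5$ color 1 can collect at most $5+5\cdot 4=25$ vertices while $N/5=26$. The slack you invoke (choice of which class is the ``bonus'' and which of $\{3,4,5\}$ is omitted) only redistributes vertices among colors 3, 4, 5 and cannot feed colors 1 and 2. Replacing whole color classes by larger independent sets of $H$ would require independent sets of size about $2m/5$, which is exactly the NP-complete question exploited in Section \ref{NP}, so this repair is essentially blocked as well.

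The paper avoids the trap by coloring the center graph $G$ with \emph{four} colors (splitting each side of the bipartition roughly in half) and letting each copy of $H$ rotate its three classes through the three colors not used on its attachment vertex; each of the four colors then collects contributions from about three quarters of the copies and lands near $N/4$. When $s$ is even this is already an equitable $4$-coloring (which also shows the lower bound is attained); when $s$ is odd, a bounded number of vertices, estimated case by case according to $m\bmod 3$, are recolored with a fifth color. If you want to keep the spirit of your plan, you must start from an equitable $4$-coloring of $G$ rather than from its bipartition.
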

\begin{proof}
Since $H \in Q_3$, we obviously have $\chi_=(G \circ H) \geq 4$.

To prove the upper bound, we consider two cases. Let $H =H(U,V,W)$ with tripartition of $H$ satisfying $|U| \geq |V| \geq |W|$.
\begin{description}
\item[\textnormal{\emph{Case} 1:}]  $s = 2k+1$, $k \geq 1$.

We start with the following 4-coloring of $G \circ H$.
\begin{enumerate}
\item Color graph $G$ with 4 colors, using each of colors 1 and 2 $k$ times and colors 3 and 4 ($k+1$) times, respectively.
\item Color the vertices of each copy of $H(U,V,W)$ linked to an $i$-vertex of $G$ using color $(i+1) \bmod 4$ for vertices in $U$, color $(i+2) \bmod 4$ for vertices in $V$, and color $(i+3) \bmod 4$ for vertices in $W$ (we use color 4 instead of 0).
\end{enumerate}

Now, we have to consider three subcases, where we bound the number of vertices that have to be recolored to 5.

\begin{description}
\item[\textnormal{\emph{Subcase} 1.1:}] $H \in Q_3(t+1,t,t)$, where $t=v=w$.

The color sequence of the 4-coloring of this corona is  $\mathcal{C}_4=(c_1,c_2,c_3,c_4)=(3kt+2k+2t+1,3kt+2k+2t,3kt+2k+t+1,3kt+2k+t+2)$.

In every equitable 5-coloring of the corona $G \circ H$, where $G\in Q_2(2k+1)$ and $H \in Q_3(t+1,t,t)$, every color must be used 
$\gamma_5^1=\lceil (12kt+8k+6t+4)/5\rceil=(2kt+t+k+\lceil (2kt+t+3k+4)/5\rceil)$ or $\gamma_5^2=(2kt+t+k+\lfloor (2kt+t+3k+4)/5\rfloor)$ times.
The number $d_i$ of vertices colored with $i$, $1 \leq i \leq 4$, that have to be recolored is equal to $c_i - \gamma_5^1$ or $c_i - \gamma_5^2$. We have
$$d_1\leq c_1 - \gamma_5^1 \leq c_1 - \gamma_5^2 = kt+t+k+1 - \lfloor (2kt+t+3k+4)/5\rfloor =$$ $$=(k+1)(t+1)-\lfloor (2kt+t+3k+4)/5\rfloor\leq (k+1)(t+1).$$
Similarly, we have
\begin{eqnarray*}
d_2 &\leq &k(t+1)+t,\\
d_3 &\leq &k(t+1), \text{ and}\\
d_4 &\leq &k(t+1).
\end{eqnarray*}

\item[\textnormal{\emph{Subcase} 1.2:}] $H \in Q_3(t+1,t+1,t)$, where $t=w$.

The color sequence of the 4-coloring of this corona is $\mathcal{C}_4=(c_1,c_2,c_3,c_4)=(3kt+3k+2t+2,3kt+3k+2t+1,3kt+3k+t+1,3kt+3k+t+2)$.

In every equitable 5-coloring of the corona $G \circ H$, where $G \in Q_2(2k+1)$ and $H\in Q_3(t+1,t+1,t)$, every color must be used 
$\gamma_5^1=\lceil (12kt+12k+6t+6)/5\rceil=(2kt+t+2k+1+\lceil (2kt+t+2k+1)/5\rceil)$ or $\gamma_5^2=(2kt+t+2k+1+\lfloor (2kt+t+2k+1)/5\rfloor)$ times.

Similarly, as in Subcase 1.1, we have
\begin{eqnarray*}
d_1 &\leq &c_1 - \gamma_5^1 \leq c_1 - \gamma_5^2 \leq (k+1)(t+1),\\
d_2 &\leq &k(t+1)+y,\\
d_3 &\leq & k(t+1),\text{ and}\\
d_4 &\leq & k(t+1).
\end{eqnarray*}
\item[\textnormal{\emph{Subcase} 1.3:}] $H \in Q_3(t)$, where $t=u=v=w$.

The color sequence of the 4-coloring of this corona is $\mathcal{C}_4=(c_1,c_2,c_3,c_4)=(3kt+k+2t,3kt+k+2t,3kt+k+t+1,3kt+k+t+1)$.

In every equitable 5-coloring of the corona $G \circ H$, where $G \in Q_2(2k+1)$ and $H \in Q_3(t,t,t)$, every color must be used 
$\lceil (12kt+4k+6t+2)/5\rceil=(2kt+t+\lceil (2kt+t+4k+2)/5\rceil)$ or $(2kt+t+\lfloor (2kt+t+4k+2)/5\rfloor)$ times.

Similarly, as in previous subcases, we have
\begin{eqnarray*}
d_1 &\leq & (k+1)t,\\
d_2 &\leq &kt+t,\\
d_3 &\leq & kt,\text{ and}\\
d_4 &\leq &kt.
\end{eqnarray*}
\end{description}

Consequently, in all subcases, the number of $i$-vertices that have to be recolored is bounded by:
\begin{itemize}
\item $(k+1)u$ for $i=1$,
\item $ku+w$ for $i=2$,
\item $ku$ for $i=3,4$.
\end{itemize}

To obtain an equitable 5-coloring from the 4-coloring of $G \circ H(U,V,W)$, $|U|\geq |V| \geq |W|$, we recolor the appropriate number of $i$-vertices in partitions $U$ linked to $(i-1)$-vertices of $G$ for the vertices which were colored with color $i$. Due to the above, this is possible in the cases of 
colors 1, 3 and 4. In the case of 2-vertices, the number of vertices recolored  in partition $U$ in copies of $H$ can be insufficient. 
In this case, we can recolor the vertices in partition $W$ (of cardinality $w$) in one copy of $H$ linked to 3-vertex of $G$.

\item[\textnormal{\emph{Case} 2:}]  $s = 2k$, $ k\geq 2$.

Again, we start with 4-coloring of $G \circ H$, as follows.
\begin{enumerate}
\item Color graph $G$ with 4 colors, using each of colors 1,2, 3 and 4 $k$ times.
\item Color the vertices of each copy of $H(U,V,W)$ linked to an $i$-vertex of $G$ using color $(i+1) \bmod 4$ for vertices in $U$, color $(i+2) \bmod 4$ for 
vertices in $V$, and color $(i+3) \bmod 4$ for vertices in $W$ (we use color 4 instead of 0).
\end{enumerate}

Notice that the resulting 4-coloring does not require recoloring: it is equitable and establishes that the lower bound is tight.
\end{description}
\end{proof}

Similar technique for obtaining an equitable coloring is used in the proof of the following theorem, by introducing the fifth color.

\begin{theorem}
If $G, H \in Q_3$, then $$4 \leq \chi_=(G\circ H) \leq 5.$$ \label{Cub3_xxx}
\end{theorem}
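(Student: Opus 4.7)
The lower bound is immediate: the apex in $G$ of each copy of $H$ is adjacent to all vertices of that copy, so since $\chi(H)=3$ for $H\in Q_3$ we obtain $\chi_=(G\circ H)\geq \chi(G\circ H)\geq 4$.

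For the upper bound, the plan is to reuse, essentially verbatim, the construction from the proof of Theorem~\ref{2_3_5}. Since $G$ is cubic, the Hajnal--Szemer\'edi theorem yields an equitable $4$-coloring of $G$ with colors $1,2,3,4$. Writing $H=H(U,V,W)$ with $|U|\geq |V|\geq |W|$, I would extend this coloring to $G\circ H$ by assigning, to each copy of $H$ linked to an $i$-vertex of $G$, the colors $(i+1)\bmod 4$, $(i+2)\bmod 4$, $(i+3)\bmod 4$ on its partitions $U$, $V$, $W$ respectively (with color~$4$ in place of~$0$). The crucial point is that the color frequencies in $G\circ H$ so obtained depend only on the multiset of sizes of $G$'s four color classes, not on whether $G$ is bipartite or tripartite; hence the arithmetic done in Theorem~\ref{2_3_5} transfers unchanged.

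If $n:=|V(G)|=4k$, all four color classes of $G$ have size $k$ and each color appears exactly $k(1+|U|+|V|+|W|)$ times in the corona, giving an equitable $4$-coloring directly. If $n=4k+2$, two color classes of $G$ have size $k$ and two have size $k+1$; after relabeling colors to match the Case~$1$ layout of Theorem~\ref{2_3_5} (take $n_1=n_2=k$, $n_3=n_4=k+1$), the resulting color sequence $(c_1,c_2,c_3,c_4)$ in $G\circ H$ is exactly the one computed there in each of the subcases $H\in Q_3(t)$, $Q_3(t+1,t,t)$, $Q_3(t+1,t+1,t)$. For each color $i$ I would then recolor $d_i$ vertices of color $i$ to a new color~$5$, chosen so that the five resulting classes meet the equitable target $\lfloor |V(G\circ H)|/5\rfloor$ or $\lceil |V(G\circ H)|/5\rceil$. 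As in Theorem~\ref{2_3_5}, the recolored vertices are taken from the $U$-partition of copies linked to $(i-1)$-vertices of $G$, with any overflow for color~$2$ absorbed by the $W$-partition of a single copy linked to a $3$-vertex.

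The main obstacle is to confirm that the bounds $d_1\leq (k+1)|U|$, $d_2\leq k|U|+|W|$, $d_3,d_4\leq k|U|$ from Theorem~\ref{2_3_5} really do carry over, and that the prescribed recoloring targets exist for all admissible parameters (in particular for the smallest cubic $G\in Q_3$ and for small $t$). Since these bounds rely only on the tuple of class sizes of $G$'s equitable $4$-coloring, independent of $G$'s own tripartite structure, the verification reduces to the same elementary floor/ceiling estimates already carried out there, so once the identification of color counts is made the recoloring step goes through by the same argument.
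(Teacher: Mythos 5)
Your proof is correct, but it takes a genuinely different route from the paper's. The paper never $4$-colors the center graph: it colors $G=G(A,B,C)$ with its equitable \emph{tri}partition (colors $1,2,3$ only), extends to the copies of $H$ by the same $(i+1,i+2,i+3)\bmod 4$ rule, and then recolors into color $5$; this forces a fresh $3\times 3$ case analysis over $G\in Q_3(s),Q_3(s+1,s,s),Q_3(s+1,s+1,s)$ and $H\in Q_3(t),Q_3(t+1,t,t),Q_3(t+1,t+1,t)$, with new recoloring bounds (inequalities (\ref{first})--(\ref{last})) expressed in terms of $|A|,|B|,|C|$ and a recoloring scheme that draws the $4$-vertices from three different partitions. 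Your key observation --- that $G$ has an equitable $4$-coloring by Hajnal--Szemer\'edi \cite{hfs:haj}, and that the color census of the corona under the $(i+1,i+2,i+3)$ rule depends only on the ordered sizes of $G$'s four color classes, not on $G$'s structure --- is sound: with $(n_1,n_2,n_3,n_4)=(k,k,k+1,k+1)$ one gets $c_j=n_j+n_{j-1}u+n_{j-2}v+n_{j-3}w$, which reproduces exactly the sequences of Theorem \ref{2_3_5}, Case 1 (e.g.\ $c_1=3kt+2k+2t+1$ for $H\in Q_3(t+1,t,t)$), and the supplies of recolorable $i$-vertices ($n_{i-1}u$ from the $U$-partitions linked to $(i-1)$-vertices, plus the $W$-reserve for color $2$) likewise depend only on the $n_j$, so the recoloring of Theorem \ref{2_3_5} transfers verbatim; when $4\mid n$ the $4$-coloring is already equitable. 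This buys a shorter, unified argument (Theorems \ref{2_3_5} and \ref{Cub3_xxx} collapse into one statement for arbitrary cubic $G$) and even the slightly stronger conclusion that $\chi_=(G\circ H)=4$ whenever $4\mid |V(G)|$, which the paper records only for bipartite $G$. What the paper's route buys instead is that it uses only the tripartition of $G$ as input, whereas your argument needs an equitable $4$-coloring of $G$ as a subroutine; that is irrelevant to the existential statement proved here, but if you also want to support the paper's linear-time algorithmic claim you should say how such a $4$-coloring is obtained efficiently (the general Kierstead et al.\ algorithm runs in $O(\Delta n^2)$ time).
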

\begin{proof}
Let $G=G(A,B,C)$, where $|A| \geq |B| \geq |C| \geq |A| - 1$, and let $H=H(U,V,W)$, where $|U|\geq |V| \geq |W| \geq |U|-1$. We start with a 4-coloring of $G \circ H$.
\begin{enumerate}
\item Color the vertices of graph $G$ with 3 colors: the vertices in $A$ with color 1, in $B$ with 2, and in $C$ with color 3.
\item Color the vertices of each copy of $H$ linked to an $i$-vertex using color $(i+1) \bmod 4$ for vertices in $U$, color $(i+2) \bmod 4$ for vertices in $V$, and color $(i+3) \bmod 4$ for vertices in $W$, $i=1,2,3$ (see Fig. \ref{zPrism}a)).
\end{enumerate} 

\begin{figure}[htb]
\begin{center}
\includegraphics[scale=1]{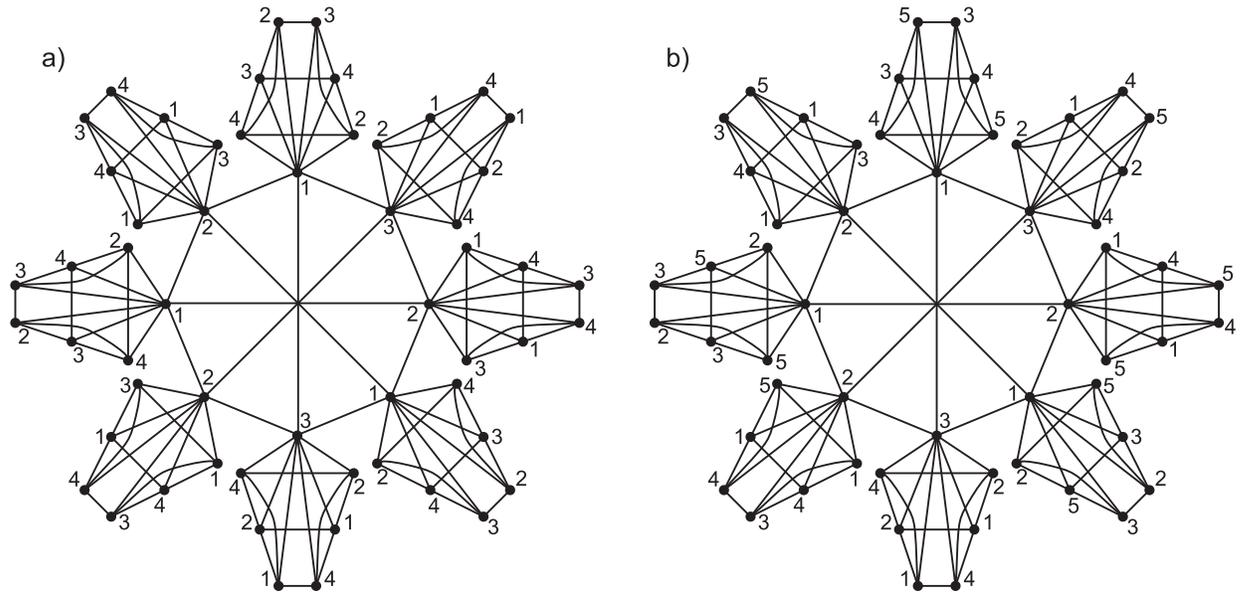} 
%\vspace{-4cm}
\caption{An example of coloring of $W \circ P$, where $W$ is the Wagner graph and $P$ is the prism graph: a) ordinary 4-coloring; b) equitable 5-coloring.}
\label{zPrism}
\end{center}
\end{figure}

Since $|V(G \circ H)|=(m+1)n$, the color cardinality sequence $\mathcal{C}=(c_1,c_2,c_3,c_4)$ of the above 4-coloring of $G\circ H$ is as follows:
\begin{eqnarray*}
\Big(& &\left\lceil n/3 \right\rceil + \left\lceil (n-1)/3\right\rceil \left\lceil (m-2)/3\right\rceil + \left\lceil (n-2)/3\right\rceil \left\lceil(m-1)/3\right\rceil, \\
& &\left\lceil n/3\right\rceil\left\lceil m/3\right\rceil + \left\lceil (n-1)/3\right\rceil+\left\lceil(n-2)/3\right\rceil\left\lceil(m-1)/3\right\rceil,\\
& & \left\lceil n/3\right\rceil \left\lceil(m-1)/3\right\rceil +\left\lceil(n-1)/3\right\rceil\left\lceil m/3\right\rceil+\left\lceil(n-2)/3\right\rceil,\\ & & \left\lceil n/3\right\rceil\left\lceil(m-2)/3\right\rceil+\left\lceil(n-1)/3\right\rceil\left
\lceil(m-1)/3\right\rceil+ \left\lceil(n-2)/3\right\rceil\left\lceil m/3\right\rceil\Big),
\end{eqnarray*}
respectively. This 4-coloring is not equitable. We have to recolor some vertices colored with 1, 2, 3 and 4 into 5. 
The number of vertices colored with $i$, $1 \leq i \leq 4$, that have to be recolored is equal to $c_i - \lceil ((m+1)n-i+1)/5\rceil$.

\vspace{3mm}We have the following claims:
%\noindent \textit{Claim:}

\begin{eqnarray}
c_1 - \left\lceil \frac{(m+1)n}{5}\right\rceil & \leq & \left\lfloor \frac{1}{2} \left\lceil\frac{n-2}{3}\right\rceil\right\rfloor \left\lceil\frac{m-1}{3}\right\rceil =  \left\lfloor \frac{1}{2}|C|\right\rfloor|V|,\label{first}\\
c_2 - \left\lceil \frac{(m+1)n-1}{5}\right\rceil &\leq& \left\lfloor \frac{1}{2} \left\lceil\frac{n}{3}\right\rceil\right\rfloor \left\lceil\frac{m}{3}\right\rceil = \left\lfloor \frac{1}{2}|A|\right\rfloor|U|,\\
c_3 - \left\lceil \frac{(m+1)n-2}{5}\right\rceil &\leq& \left\lfloor \frac{3}{4} \left\lceil\frac{n-1}{3}\right\rceil\right\rfloor \left\lceil\frac{m}{3}\right\rceil=\left\lfloor \frac{3}{4}|B|\right\rfloor|U|, \text{ and}\\
c_4 - \left\lceil \frac{(m+1)n-3}{5}\right\rceil &\leq& \left\lceil \frac{1}{2}\left\lceil \frac{n-2}{3} \right\rceil \right\rceil \left\lceil \frac{m-2}{3}\right\rceil+\left \lceil \frac{1}{4} \left\lceil \frac{n-1}{3}\right\rceil \right\rceil \left \lceil \frac{m-1}{3} \right\rceil+\nonumber\\
& &+ \left\lceil \frac{1}{2} \left \lceil \frac{n-2}{3}\right\rceil \right\rceil \left\lceil \frac{m}{3}\right\rceil = \nonumber \\
& &=\left\lceil\frac{1}{2}|A|\right\rceil |W| + \left\lceil\frac{1}{4}|B|\right\rceil |V| + \left\lceil\frac{1}{2}|C|\right\rceil |U|. \label{last}
\end{eqnarray}

\noindent\textit{Proof of inequalities \emph{(\ref{first})-(\ref{last}).}} Let us consider three cases,
$G\in Q_3(s), Q_3(s+1,s,s),$ and $Q_3(s+1,s+1,s)$, and in each case three subcases, $H \in Q_3(t), Q_3(t+1,t,t), Q_3(t+1,t+1,t)$, respectively. The estimation 
technique for the number of vertices that have to be recolored to color 5 is similar to that used in the proof of Theorem \ref{2_3_5}. 

\begin{description}
\item[\textnormal{\emph{Case} 1:}] $G\in Q_3(s)$, \emph{where} $s=2k$ \emph{for some $k\geq 1$.}

\emph{Subcase} 1.1: $H\in Q_3(t)$, \emph{where $t=2l$ for some $l\geq 1$.}

We have $|V(G \circ H)|=(3t+1)3s=5(7kl+k)+kl+k$, while the color cardinality sequence $\mathcal{C}$ of the 4-coloring of $G\circ H$ is  
$\mathcal{C}=(s+2st,s+2st,s+2st,3st)=(8kl+2k,8kl+2k,8kl+2k,12kl)$.

Since in every equitable 5-coloring of $G \circ H$ each of 5 colors has to be used $(7kl+k+\lceil (kl+k)/5 \rceil)$ or $(7kl+k+\lfloor (kl+k)/5 \rfloor)$ times, we have to 
recolor some vertices colored with 1, 2, 3 and 4 into 5. The number of vertices that have to be recolored is as follows:
\begin{itemize}
\item the vertices colored with 1:

$8kl+2k-7kl-k-\lceil (kl+k)/5 \rceil \rceil \leq 2kl=\left\lfloor \frac{1}{2}|C|\right\rfloor|V|$,
\item the vertices colored with 2:

$8kl+2k-7kl-k-\lceil (kl+k-1)/5 \rceil \rceil \leq 2kl=\left\lfloor \frac{1}{2}|A|\right\rfloor|U|$,
\item the vertices colored with 3:

$8kl+2k-7kl-k-\lceil (kl+k-2)/5 \rceil \leq 2kl\leq \left\lfloor \frac{3}{4}|B|\right\rfloor|U|$,

\item the vertices colored with 4:

$12kl-7kl-k-\lceil (kl+k-3)/5 \rceil \leq 4kl+\lceil\frac{k}{2} \cdot 2l\rceil=\\
=\left\lceil\frac{1}{2}|A|\right\rceil |W| + \left\lceil\frac{1}{4}|B|\right\rceil |V| + \left\lceil\frac{1}{2}|C|\right\rceil |U|$.
\end{itemize}

\emph{Subcase} 1.2: $H\in Q_3(t+1,t,t)$, \emph{where $t=2l+1$ for some $l\geq 1$.}

We have $|V(G \circ H)|=(3t+2)3s=5(7kl+6k)+kl$, while the color cardinality sequence $\mathcal{C}$ of the 4-coloring of $G\circ H$ is  $\mathcal{C}=(s+2st,2s+2st,2s+2st,3st+s)=(8kl+6k,8kl+8k,8kl+8k,12kl+8k)$.

Since in every equitable 5-coloring of $G \circ H$ each of 5 colors has to be used $(7kl+6k+\lceil kl/5 \rceil)$ or $(7kl+6k+\lfloor kl/5 \rfloor)$ times, we have to recolor some vertices colored with 1, 2, 3 and 4 into 5. The number of vertices that have to be recolored is as follows:
\begin{itemize}
\item the vertices colored with 1:

$kl-\lceil kl/5 \rceil \leq 2kl+k = \left\lfloor \frac{1}{2}|C|\right\rfloor|V|$,
\item the vertices colored with 2:

$k(l+1)+k-\lceil (kl-1)/5 \rceil \leq 2k(l+1)=\left\lfloor \frac{1}{2}|A|\right\rfloor|U|$,
\item the vertices colored with 3:

$k(l+1)+k-\lceil (kl-2)/5 \rceil \leq \lfloor \frac{3}{4} k \rfloor(2l+2)=\left\lfloor \frac{3}{4}|B|\right\rfloor|U|$,

\item the vertices colored with 4:

$5kl+2k-\lceil (kl-3)/5 \rceil \leq 4kl+2k+\lceil\frac{k}{2}\rceil(2l+1) = \\ =\left\lceil\frac{1}{2}|A|\right\rceil |W| + \left\lceil\frac{1}{4}|B|\right\rceil |V| + \left\lceil\frac{1}{2}|C|\right\rceil |U|$.
\end{itemize}

\emph{Subcase} 1.3: $H\in Q_3(t+1,t+1,t)$, \emph{where $t=2l$ for some $l\geq 1$.}

We have $|V(G \circ H)|=(3t+3)3s=5(7kl+3k)+kl+3k$, while the color cardinality sequence $\mathcal{C}$ of the 4-coloring of $G\circ H$ is $\mathcal{C}=(2s+2st,2s+2st,3s+2st,3st+2s)=(8kl+4k,8kl+4k,8kl+6k,12kl+4k)$.

Since in every equitable 5-coloring of $G \circ H$ each of 5 colors has to be used $(7kl+3k+\lceil (kl+3k)/5 \rceil)$ or $(7kl+3k+\lfloor (kl+3k)/5 \rfloor)$ times, we have to recolor some vertices colored with 1, 2, 3 and 4 into 5. The number of vertices that have to be recolored is as follows:
\begin{itemize}
\item the vertices colored with 1:

$kl+k-\lceil (kl+3k)/5 \rceil \leq 2kl+k=\left\lfloor \frac{1}{2}|C|\right\rfloor|V|$,
\item the vertices colored with 2:

$kl+k-\lceil (kl+3k-1)/5 \rceil \leq 2kl+k=\left\lfloor \frac{1}{2}|A|\right\rfloor|U|$,
\item the vertices colored with 3:

$kl+3k-\lceil (kl+3k-2)/5 \rceil \leq \lfloor\frac{3}{2}k\rfloor(2l+1)=\left\lfloor \frac{3}{4}|B|\right\rfloor|U|$,

\item the vertices colored with 4:

$5kl+k-\lceil (kl+3k-3)/5 \rceil \leq 4kl+k+\lceil\frac{k}{2}\rceil(2l+1)=\\=\left\lceil\frac{1}{2}|A|\right\rceil |W| + \left\lceil\frac{1}{4}|B|\right\rceil |V| + \left\lceil\frac{1}{2}|C|\right\rceil |U|$.
\end{itemize}

\item[\textnormal{\emph{Case} 2:}] $G\in Q_3(s+1,s,s)$, \emph{where $s=2k+1$ for some $k \geq 1$. } The proof follows by a similar argument to that in Case 1, we omit the details.
\item[\textnormal{\emph{Case} 3:}] $G\in Q_3(s+1,s+1,s)$, \emph{where $s=2k$ for some $k\geq 1$.} The proof follows by a similar argument to that in Case 1, we omit the details.
\end{description}

\noindent\textit{End of the proof of inequalities \emph{(\ref{first})-(\ref{last}).}}

\vspace{3mm}
Now, to obtain an equitable 5-coloring of $G \circ H$, we choose the vertices that have to be recolored.
\begin{itemize}
\item Since the number of 1-vertices that have to be recolored to 5 is not greater than $\lfloor\frac{1}{2}|C|\rfloor |V|$, then the vertices colored with 1 are chosen from the partitions $V$ of $\lfloor\frac{1}{2}|C|\rfloor$ copies of $H$ linked to the vertices from partition $C$ of $G$.
\item Similarly, 2-vertices that have to be recolored are chosen from the partitions $U$ of $\lfloor\frac{1}{2}|A|\rfloor$ copies of $H$ linked to the vertices from partition $A$ of $G$.
\item 3-vertices to be recolored are chosen from the partitions $U$ of $\lfloor\frac{3}{4}|B|\rfloor$ copies of $H$ linked to the vertices from partition $B$ of $G$.
\item 4-vertices are chosen from: 
\begin{itemize}
\item partitions $W$ of $\lceil\frac{1}{2}|A|\rceil$ copies of $H$ linked to the vertices from the partition $A$ of $G$ (different copies than in recoloring of 2-vertices),
\item partitions $V$ of $\lceil\frac{1}{4}|B|\rceil$ copies of $H$ linked to the vertices from the partition $B$ of $G$ (different copies than in recoloring of 3-vertices),
\item partitions $U$ of $\lceil\frac{1}{2}|C|\rceil$ copies of $H$ linked to the vertices from the partition $C$ of $G$ (different copies than in recoloring of 1-vertices) (see Fig. \ref{zPrism}b)).
\end{itemize}
\end{itemize}
Taking into account our claim, such recoloring is possible.
\end{proof}

As we have already observed, the lower bound in Theorem \ref{2_3_5} is tight. Also upper bounds in Theorems \ref{2_3_5} and \ref{Cub3_xxx} are tight. There are infinitely 
many coronas $G \circ H$, where $G \in Q_2 \cup Q_3$ and $H\in Q_3$, that require five colors to be equitably colored. For example, in such coronas graph 
$H \in Q_3$ may be built of $3t$ ($t$ must be even) vertices and it must contain $t$ disjoint triangles (cycles $C_3$) (cf. Fig. \ref{grafH}). Let us consider for 
example $G=K_{3,3}$. In the corona $K_{3,3} \circ H$, where $H$ is defined as above, the number of vertices is equal to $36k+6$, for some positive integer $k$. In any 
equitable 4-coloring of the corona, the color sequence must be  $(9k+2,9k+2,9k+1,9k+1)$. Since modifying the tripartite structure of $H$ is impossible (it contains 
$t=2k$ disjoint triangles), such a coloring does not exist for $k\geq 2$. 

\begin{figure}[htb]
\begin{center}
\begin{picture}(100,80)
\put(0,20){\circle*4}
\put(0,40){\circle*4}
\put(0,60){\circle*4}
\put(0,80){\circle*4}

\put(20,0){\circle*4}
\put(40,0){\circle*4}
\put(60,0){\circle*4}
\put(80,0){\circle*4}
\put(100,20){\circle*4}
\put(100,40){\circle*4}
\put(100,60){\circle*4}
\put(100,80){\circle*4}
%\linethickness{0.6mm}
\thicklines
\put(0,80){\line(1,0){100}}
\thicklines
\put(0,80){\line(1,-1){80}}
\thicklines
\put(100,80){\line(-1,-4){20}}

\thicklines
\put(0,60){\line(1,0){100}}
\thicklines
\put(0,60){\line(1,-1){60}}
\thicklines
\put(100,60){\line(-2,-3){40}}

\thicklines
\put(0,40){\line(1,0){100}}
\thicklines
\put(0,40){\line(1,-1){40}}
\thicklines
\put(100,40){\line(-3,-2){60}}

\thicklines
\put(0,20){\line(1,0){100}}
\thicklines
\put(0,20){\line(1,-1){20}}
\thicklines
\put(100,20){\line(-4,-1){80}}

\thinlines
\put(0,80){\line(5,-1){100}}
\thinlines
\put(0,60){\line(5,-1){100}}
\thinlines
\put(0,40){\line(3,-2){60}}
\thinlines
\put(0,20){\line(2,-1){40}}

\thinlines
\put(100,80){\line(-1,-1){80}}
\thinlines
\put(100,20){\line(-1,-1){20}}
\end{picture}
\caption{An example of graph $H \in Q_3$ for which $\chi_=(G \circ H)=5$, for $G \in Q_3$.}\label{grafH}
\end{center}
\end{figure}
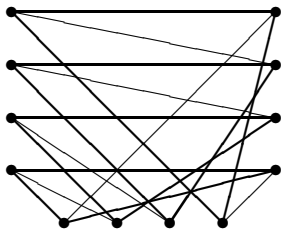

\section{Complexity results}\label{NP}
Although we have only two possible values, 4 and 5, for $\chi_=(G \circ H)$, where $G\in Q_2 \cup Q_3$ and $H \in Q_3$, it is hard to decide which is correct\footnote{graphs considered in this section need not be connected}. All $G, H$ are still cubic.

\noindent We consider the following combinatorial decision problems:
\begin{table}[htb]
\begin{tabular}{ll}
IS$_3(H,k)$: & Given a cubic graph $H$ on $m$ vertices and an integer $k$, the question \\
& is: does $H$ have an independent set $I$ of size at least $k$?\\
\end{tabular}

\vspace{0.5cm}
and its subproblem for $m=10q$, $k=4m/10=4q$, i.e. IS$_3(H,4q)$.
\vspace{0.5cm}
\end{table}

Note that the IS$_3(H,k)$ problem is NP-complete and remains so even if $10|m$ \cite{garey}. This is so because we can enlarge $H$ by adding $j$ $(0\leq j \leq 4)$ 
isolated copies of $K_{3,3}$ to it so that the number of vertices in the new graph is divisible by 10. Graph $H$ has an independent set of size at least $k$ if and only if the new graph 
has an independent set of size at least $k + 3j$.

\begin{lemma}
Problem \emph{IS}$_3(H,4m/10)$ is \emph{NP}-complete.
\end{lemma}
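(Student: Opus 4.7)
My plan is to prove NP-completeness in the standard two steps. Membership in NP is immediate: an independent set of size $\lceil 4m/10\rceil$ is a short certificate verifiable in polynomial time. For the hardness, I would reduce from the unrestricted problem IS$_3(H,k)$ with $10\mid m$ (already NP-complete by the remark preceding the lemma) via a padding argument.

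Given an instance $(H,k)$ of IS$_3$ with $10\mid m$, I would take $H'$ to be the disjoint union of $H$ with $a$ copies of $K_{3,3}$, $b$ copies of $K_4$, and $c$ copies of the $3$-cube $Q_3$, for small non-negative integers $a,b,c$ to be chosen. Each summand is cubic, so $H'$ is cubic, and since the components are vertex-disjoint, $\alpha(H')=\alpha(H)+3a+b+4c$.

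The key observation is that the ``defect'' $\alpha(F)-\tfrac{4}{10}|V(F)|$ equals $+3/5$ for $F=K_{3,3}$, $-3/5$ for $F=K_4$, and $+4/5$ for $F=Q_3$. A one-line computation then gives
\[
\alpha(H')-\tfrac{4}{10}|V(H')|\;=\;\alpha(H)-\tfrac{4}{10}m+\tfrac{3(a-b)+4c}{5}.
\]
Setting $D:=2m-5k$, I would arrange $3(a-b)+4c=D$ so that the right side collapses to $\alpha(H)-k$, making $\alpha(H')\ge \tfrac{4}{10}|V(H')|$ equivalent to $\alpha(H)\ge k$. I would pick $c\in\{0,1,2\}$ so that $D-4c\equiv 0\pmod 3$, and then $a-b=(D-4c)/3$ (with one of $a,b$ set to $0$ according to the sign), which keeps the construction polynomial-sized since $|D|=O(m+k)$.

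The one remaining thing to verify — and what I expect to be the main subtle point — is the vertex-count divisibility $10\mid |V(H')|=m+6a+4b+8c$. Using $10\mid m$, this reduces to $3a+2b+4c\equiv 0\pmod 5$; substituting $a=b+(D-4c)/3$ collapses the left side to $D\pmod 5$. Since $D=2m-5k$ and $10\mid m$, one has $D\equiv 2m\equiv 0\pmod 5$ automatically. This small modular coincidence is precisely what allows the padding to work with only these three gadget types, and with it the reduction is complete.
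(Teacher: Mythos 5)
Your proof is correct and takes essentially the same route as the paper's: a padding reduction from IS$_3(H,k)$ with $10\mid m$ that adds disjoint cubic gadgets whose independence-number ``defects'' relative to the $4/10$ threshold sum to the required shift while keeping the total vertex count divisible by $10$ --- the paper uses $K_4$, the prism and $K_{3,3}$ bundled so that each unit of padding contributes $10$ or $40$ vertices and defect exactly $\mp 1$, whereas you use $K_4$, $K_{3,3}$ and the $3$-cube and solve a small linear congruence instead. One cosmetic caveat: the symbol $Q_3$ already denotes the class of equitably $3$-chromatic cubic graphs in this paper, so your cube gadget should be renamed.
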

\begin{proof}
Our polynomial reduction is from IS$_3(H,k)$. For an $m$-vertex cubic graph $H$,  $10|m$, and an integer $k$, let $r = |4m/10-k|$. If $k \geq 4m/10$ then we construct a cubic graph 
$G = H + rK_4 + r P$ else we construct $G = H + rK_4 + 2rP + 4rK_{3,3}$, where $P \in Q_3(2)$ is the prism graph. It is easy to see that the answer to problem IS$_3(H,k)$ is 'yes' if and only if the answer to 
problem IS$_3(G,4m/10)$ is 'yes'. 
\end{proof}

\begin{lemma}
Let $H$ be a cubic graph and let $k = 4/10m$, where $m$ is the number of vertices of $H$. The problem of deciding whether $H$ has a coloring of type $(4m/10,3m/10,3m/10)$ is \emph{NP}-complete.
\end{lemma}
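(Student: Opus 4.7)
Membership in NP is immediate: one checks in polynomial time that a given partition of $V(H)$ into three classes has the prescribed sizes and is proper, by inspecting each edge and counting class cardinalities.

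My plan for NP-hardness is to reduce from $\mathrm{IS}_3(H, 4m/10)$, proved NP-complete in the preceding lemma. Given a cubic graph $H$ on $m$ vertices with $10 \mid m$, I would construct a cubic graph $H^*$ on $m^* = m + |G_0|$ vertices (with $10 \mid m^*$) of the form $H^* = H \cup G_0$, where $G_0$ is an auxiliary cubic graph, and show that $H^*$ admits a $3$-coloring of type $(4m^*/10, 3m^*/10, 3m^*/10)$ if and only if $H$ has an independent set of size at least $4m/10$. The forward implication ``coloring $\Rightarrow$ IS'' is easy in spirit since the class of size $4m^*/10$ is itself an independent set; the role of $G_0$ is to force this class to use at least $4m/10$ vertices of $H$ and to compensate for the forward direction's structural defects.

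For the easier (reverse) direction I would pick $G_0$ so that $\alpha(G_0) \leq 4|G_0|/10$, the natural rigid choice being a disjoint union of Petersen graphs $P$ (for which $|P| = 10$ and $\alpha(P) = 4$). Then in any target-type coloring of $H^*$, the class of size $4m^*/10$ contains at most $4|G_0|/10$ vertices of $G_0$, hence at least $4m/10$ vertices of $H$, yielding the required independent set.

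For the forward direction I would start from an independent set $I \subseteq V(H)$ of size $4m/10$, color $I$ together with a maximum independent set of each Petersen copy by color~$1$, and then color the remainder with colors~$2$ and $3$. Since $H - I$ need not be bipartite with balanced parts, I would augment $G_0$ by adjoining ``flexible'' cubic blocks---copies of $K_{3,3}$, which admit $3$-colorings with class sizes $(3, c, 3-c)$ for any $c \in \{0,1,2,3\}$, and copies of the prism graph, which rigidly contribute $(2,2,2)$---tuned so that (i)~the imbalance between colors~$2$ and~$3$ in an arbitrary $2$-coloring of $H - I$ can be absorbed by shifting $K_{3,3}$ contributions, and (ii)~odd components of $H - I$ can be neutralized by substituting a few Petersen-class-$1$ vertices with alternate independent sets of the same size. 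The hard part will be the combinatorial verification that this rigid-plus-flexible gadget is always rich enough to complete every IS of size $4m/10$ in $H$ into a target-type $3$-coloring of $H^*$; the analysis reduces to cases on the bipartition-class imbalance and the parity of the components of $H - I$, using only the already-catalogued $3$-coloring spectra of $P$, $K_{3,3}$, and the prism.
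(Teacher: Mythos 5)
There is a genuine gap in the forward direction of your reduction, and it is exactly the point where the paper has to invoke a nontrivial external theorem. Starting from an independent set $I$ of size $4m/10$ in $H$, you propose to ``color the remainder with colors 2 and 3''; but $H-I$ need not be bipartite --- it may contain odd cycles --- in which case no proper $2$-coloring of $H-I$ exists at all. None of your gadgetry can repair this, because $G_0$ is vertex-disjoint from $H$: swapping which independent set you use inside a Petersen copy, or retuning $K_{3,3}$ blocks, changes only the class \emph{sizes} contributed by $G_0$, never the colorability of $H-I$. The only fix is to replace $I$ by a different independent set $I'$ of the same size \emph{inside $H$} whose removal leaves a bipartite (indeed equitably $2$-colorable) graph. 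That this is always possible for cubic graphs is precisely the bipartization theorem the paper cites (Furma\'nczyk, Kubale, Radziszowski, \emph{On bipartization of cubic graphs by removal of an independent set}): if a cubic graph on $m$ vertices has an independent set of size at least $4m/10$, it has one of size exactly $4m/10$ whose removal leaves an equitably $2$-colorable bipartite graph. With that theorem in hand, no auxiliary graph is needed at all: the paper's proof is a direct equivalence, on the same graph $H$, between ``$H$ has a coloring of type $(4m/10,3m/10,3m/10)$'' and ``$H$ has an independent set of size at least $4m/10$'' (the forward implication being your easy observation that the largest class is such a set, already of size exactly $4m/10$).

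A secondary problem: your reverse direction relies on every component of $G_0$ having independence ratio at most $4/10$, which holds for the Petersen graph ($4/10$) and the prism ($2/6$), but fails for $K_{3,3}$ ($3/6$). The moment you adjoin $K_{3,3}$ copies for ``flexibility,'' the class of size $4m^*/10$ can absorb half of each such copy and the counting argument no longer forces $4m/10$ vertices into $H$. So the rigidity you need for one direction and the flexibility you want for the other pull against each other, and the proposal does not resolve this tension. Both difficulties disappear once you use the bipartization theorem and drop the auxiliary graph entirely.
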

\begin{proof}
We prove that $H$ has a coloring of type $(4m/10,3m/10,3m/10)$ if and only if there is an affirmative answer to IS$_3(H,4m/10)$.

Suppose first that $H$ has the above 3-coloring. Then the color class of size $4m/10$ is an independent set that forms a solution to IS$_3(H,4m/10)$.

Now suppose that there is a solution $I$ to IS$_3(H,4m/10)$. Thus $|I| \geq 4m/10$. We know from \cite{bipart} that in this case there exists an independent set $I'$ of size exactly $4m/10$ such 
that the subgraph $H-I'$ is equitably 2-colorable bipartite graph.
This means that $H$ can be 3-colored so that the color sequence is $(4m/10,3m/10,3m/10)$.
\end{proof}

In the following we show that, given such an unequal coloring of $H$, we can color $K_{3,3}\circ H$ equitably with 4 colors.

\begin{enumerate}
\item[(\emph{i})] Color the vertices of $K_{3,3}$ with 4 colors - the color sequence is $(2,2,1,1)$.
\item[(\emph{ii})] Color the vertices in copies of $H=H(U,V,W)$, $|U|=4m/10$, $|V|=|W|=3m/10$, in the following way:
\begin{itemize}
\item vertices in partitions $U$ of $H$ adjacent to a 1-vertex of $K_{3,3}$ are colored with color 2, in partitions $V$ - with 3, and in partitions $W$ - with 4,
\item vertices in partitions $U$ of $H$ adjacent to a 2-vertex of $K_{3,3}$ are colored with color 1, in partitions $V$ - with 3, and in partitions $W$ - with 4,
\item vertices in partition $U$ of $H$ adjacent to the 3-vertex of $K_{3,3}$ are colored with color 1, in partition $V$ - with 2, and in partition $W$ - with 4,
\item vertices in partition $U$ of $H$ adjacent to the 4-vertex of $K_{3,3}$ are colored with color 2, in partition $V$ - with 1, and in partition $W$ - with 3.
\end{itemize}
\end{enumerate} 

Color sequence of the corona is $(15m/10+2,15m/10+2,15m/10+1,15m/10+1)$. 

On the other hand, let us assume that the corona $K_{3,3} \circ H$, where $H \in Q_3(t)$ and $t=10k$, is equitably 4-colorable, where the color sequence for $K_{3,3}$ is $(2,2,1,1)$. 
Since $|V(K_{3,3} \circ H)|=6(3t+1)=18t+6$ and $t=10k$ for some $k$, then each of the four colors in every equitable coloring is used $45k+1$ or $45k+2$ times. Since color 1 
(similarly color 2) can be used only in four copies of $H$, then in at least one copy we have to use it $12k=12t/10$ times. It follows that there must exist an independent set of cardinality $12t/10$ in $H$. Since $H$ has $3t$ vertices, the size of this set is $4m/10$.

The above considerations lead us to the following
\begin{theorem}
The problem of deciding whether $\chi _{=}(K_{3,3} \circ H) =4$ is \emph{NP}-complete even if $H \in Q_3(t)$ and $10 | t$. \hfill $\Box$ \label{twNP}
\end{theorem}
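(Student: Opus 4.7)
The plan is to reduce from the problem shown NP-complete in the preceding lemma: deciding whether a cubic graph $H$ on $m = 3t$ vertices (with $10 \mid t$) admits a 3-coloring of type $(4m/10, 3m/10, 3m/10)$. Given such an $H$, the reduction outputs the corona $K_{3,3}\circ H$; this is clearly polynomial, and the problem is in NP since a candidate equitable $4$-coloring can be checked in polynomial time.

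For the forward direction, assume $H=H(U,V,W)$ has a 3-coloring with $|U|=4m/10 = 12k$ and $|V|=|W|=3m/10 = 9k$, where $t=10k$. I first color $K_{3,3}$ using all four colors with the cardinality sequence $(2,2,1,1)$, which is valid because $K_{3,3}$ is bipartite with equal sides, each of which can be split as $(2,1)$. Then, for each copy of $H$ hanging off an $i$-vertex of $K_{3,3}$, I extend the coloring to the three partitions $U,V,W$ using the three colors distinct from $i$, carefully choosing which of them goes on the large partition $U$ according to the four cases described just before the theorem. A direct tally then yields the color cardinality sequence $(15t/10 + 2, 15t/10 + 2, 15t/10 + 1, 15t/10 + 1)$ on $K_{3,3}\circ H$, which is equitable.

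For the reverse direction, suppose $K_{3,3}\circ H$ admits an equitable $4$-coloring. Since $|V(K_{3,3}\circ H)|=18t+6=180k+6$, every color class has size $45k+1$ or $45k+2$. As $K_{3,3}$ is complete bipartite, any proper 4-coloring has each color class entirely on one side, so its color sequence is $(2,2,1,1)$ (the competing option $(3,1,1,1)$ is ruled out by a separate pigeonhole, since it would demand an even larger independent set in $H$ than the tripartite structure allows). Consequently any color $i\in\{1,2\}$ is forbidden from the $2$ copies of $H$ rooted at an $i$-vertex, so it is distributed over exactly $4$ copies with total contribution $45k-1$ or $45k$. Using the equitability constraint across all four colors simultaneously, some copy of $H$ must contain a color class (hence an independent set) of size at least $12k = 12t/10 = 4m/10$; the preceding lemma then extracts the desired $(4m/10, 3m/10, 3m/10)$ 3-coloring of $H$, completing the reduction.

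The main obstacle is the tight arithmetic of the reverse direction: a naive pigeonhole on color $1$ over its $4$ allowed copies only yields $\lceil 45k/4\rceil \approx 11.25k$, which falls just short of the required $12k$. Sharpening to $12k$ requires combining the equitability constraints on all four colors, or leveraging the tripartite structure of $H$ together with the fact that each copy of $H$ must be partitioned by the restriction of the coloring into exactly three color classes of total size $3t=30k$. Making this counting argument rigorous and watertight is where the delicate work lies.
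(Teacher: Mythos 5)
Your overall strategy coincides with the paper's: reduce from the NP-complete problem of deciding whether a cubic $H$ on $m$ vertices admits a 3-coloring of type $(4m/10,3m/10,3m/10)$, use the explicit $(2,2,1,1)$-coloring of $K_{3,3}$ extended over the six copies of $H$ for the forward direction (your tally $(15m/10+2,15m/10+2,15m/10+1,15m/10+1)$ is exactly the paper's, modulo your writing $t$ where $m=3t$ is meant), and for the reverse direction extract from an equitable 4-coloring of the corona an independent set of size $4m/10=12k$ inside some copy of $H$, then invoke the bipartization result to upgrade it to the required 3-coloring. The forward direction is complete and correct.

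The problem is the reverse direction, and you have named it yourself: you never actually prove that some copy of $H$ receives a color class of size at least $12k$. You correctly observe that color $1$ contributes $45k-1$ or $45k$ vertices spread over its four admissible copies, so pigeonhole yields only $\lceil 45k/4\rceil\approx 11.25k$, and you then state that ``combining the equitability constraints on all four colors'' will close the gap --- but you do not carry this out, and it is not routine. If one assumes every independent set used has size at most $12k-1$, then each of the three classes in a copy has size at least $30k-2(12k-1)=6k+2$, and summing the constraints for colors $1$ and $2$ gives only $8\alpha\ge 90k-2$, i.e.\ $\alpha\ge 11.25k-0.25$, which does not contradict $\alpha=12k-1$ for $k\ge 1$. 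So the ``delicate work'' you defer is precisely the content of the reverse implication, and without it the reduction is not established. (For what it is worth, the paper's own proof is extremely terse at this exact step --- it simply asserts that one copy must use color $1$ exactly $12k$ times --- so you have put your finger on the real crux; but a proof that stops at ``making this rigorous is where the work lies'' has not proved the theorem.) A secondary, smaller issue: the case analysis on the coloring of $K_{3,3}$ should be stated as ``in all types other than $(2,2,1,1)$ the conclusion follows immediately'' rather than ``$(3,1,1,1)$ is ruled out'': if some color occurs three times on one side of $K_{3,3}$, it must supply at least $45k-2$ vertices from only three copies, forcing a class of size at least $15k>12k$ in some copy, so that case is not impossible --- it just yields the desired independent set at once; the sub-4-color cases of $K_{3,3}$ behave the same way and should also be dispatched explicitly.
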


A similar argument implies the following
\begin{corollary}
The problem of deciding whether $\chi _{=}(P \circ H) =4$, where $P$ is the prism graph, is \emph{NP}-complete even if $H \in Q_3(t)$ and $10 | t$. \hfill $\Box$ \label{NP2}
\end{corollary}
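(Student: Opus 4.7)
The plan is to carry over the argument of Theorem \ref{twNP} almost verbatim, replacing $K_{3,3}$ by the prism graph $P$. Both graphs are cubic on six vertices, so $|V(P \circ H)| = 6(3t+1) = 18t+6$, and for $t = 10k$ any equitable 4-coloring must split the $180k+6$ vertices into two classes of size $45k+2$ and two classes of size $45k+1$. Membership in NP is immediate; I would reduce from IS$_3(H,4m/10)$.

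First I would collect the structural facts about $P$. The prism lies in $Q_3(2)$, so it has a 3-partition into three classes of size $2$. Moreover $\alpha(P) = 2$, because $P$ consists of two vertex-disjoint triangles joined by a matching and any independent set contains at most one vertex from each triangle. It follows that in every proper 4-coloring of $P$ each color class has size at most $2$, so the only sequences on the six vertices of $P$ that are compatible with four colors are $(2,2,2)$ and $(2,2,1,1)$; the awkward sequence $(3,1,1,1)$ that is available for $K_{3,3}$ cannot occur here, which actually makes the case analysis cleaner than in Theorem \ref{twNP}.

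For the $(\Leftarrow)$ direction, I would start from an independent set of size $4m/10$ in $H$, apply the preceding lemma to obtain a 3-coloring of $H$ of type $(4m/10, 3m/10, 3m/10)$, 4-color $P$ with sequence $(2,2,1,1)$ (by splitting one class of its natural 3-partition into two singletons), and then use the three colors not present on each centre vertex to color the corresponding copy of $H$ exactly as in the construction for $K_{3,3}$. A direct count shows the resulting color sequence on $P \circ H$ is $(15m/10+2, 15m/10+2, 15m/10+1, 15m/10+1)$, which is equitable.

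For the $(\Rightarrow)$ direction, I would assume an equitable 4-coloring of $P \circ H$ and examine the induced sequence on $P$. In the $(2,2,1,1)$ case, pick any color, say 1, that appears twice on $P$; it is forbidden in exactly two copies of $H$, so its $45k+1$ or $45k+2$ occurrences leave at least $45k-1$ vertices distributed over the four remaining copies, and the same pigeonhole as in Theorem \ref{twNP} forces one of these copies to contain $12k = 4m/10$ vertices of color 1. Those vertices form an independent set of $H$ of the required size. The $(2,2,2)$ case, in which the fourth color appears only in the copies of $H$, is handled identically by taking any of the three colors used twice on $P$.

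The main thing to be careful about will be the exhaustiveness of the case split on the coloring of $P$ and, within each case, selecting a color used twice on $P$ so that the pigeonhole actually delivers $4m/10$; the bound $\alpha(P) = 2$ is precisely what rules out pathological centre colorings and keeps the argument parallel to Theorem \ref{twNP}.
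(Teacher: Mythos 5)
Your proposal follows exactly the paper's route: the paper proves this corollary with the single line ``a similar argument implies the following,'' i.e.\ by transplanting the $K_{3,3}$ argument of Theorem \ref{twNP} to the prism graph, which is precisely what you do, and your observation that $\alpha(P)=2$ restricts the centre colorings to the sequences $(2,2,2)$ and $(2,2,1,1)$ is the right structural point that makes the transplant legitimate. The only caveat is that the pigeonhole step you import verbatim (``at least $45k-1$ occurrences over four copies forces $12k$ in one copy'') is exactly as (im)precise as in the paper's own proof of Theorem \ref{twNP}, since $\left\lceil (45k-1)/4\right\rceil \le 12k-1$ for all $k\ge 1$; this is a subtlety inherited from the source argument rather than one you introduce.
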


In this way we have obtained the full classification of complexity for equitable coloring of cubical coronas.

\section{Conclusion}
In this paper, we presented all the cases of corona of cubic graphs for which 3 colors suffice for equitable coloring. In the remaining cases we have proved constructively that 5 colors 
are enough for equitable coloring. Since there are only two possible values for $\chi_=(G \circ H)$, 
namely 4 or 5, our algorithm is 1-absolute approximate. Due to Theorem \ref{twNP} and Corollary \ref{NP2} the algorithm cannot be improved unless $P=NP$. 
Since time spend to assign a final color to each vertex is constant, the complexity of our algorithm is linear. 
Finally, the algorithm confirms the Equitable Coloring Conjecture \cite{meyer}.

Our results are summarized in Table \ref{tabela}. This table contains also the values of classical chromatic numbers of appropriate coronas and the complexity classification. Let us notice that all cases are polynomially solvable for ordinary coloring.

\begin{table}[htb]
\begin{center}
\begin{tabular}{|c|*{6}{c|}}\hline

\backslashbox[20mm]{$G$}{$H$} & \multicolumn{2}{c|}{$Q_2$} & \multicolumn{2}{c|}{$Q_3$} & \multicolumn{2}{c|}{$Q_4$}\\ \hline
$Q_2, Q_3$ & \emph{3} & {\textbf 3} or {\textbf 4} & \emph{4} &{\textbf 4} or {\textbf 5}$^*$ &\emph{5} &\bf{5} \\ \hline
$Q_4$ & \emph{4} & \bf{4} & \emph{4} & \bf{4} & \emph{5} & \bf{5} \\ \hline
\end{tabular}

\vspace{3mm}
%\hspace{0.5cm} $^1:$ we have determined all the cases when $\chi_==3$ or $\chi_==4$,\\
\caption{The exact values of classical chromatic number (in \emph{italics}) and  possible values of the equitable chromatic number (in {\textbf{bold}}) of coronas $G \circ H$. Asterix $(^*)$ means that this case is NP-complete. The other cases are solvable in linear time.}\label{tabela}
\end{center}
\end{table}

%An open issue remains the equitable coloring of corona multiproducts of cubic graphs.

\vspace{0.5cm}

\noindent {\large \textbf{Acknowledgments}}

The authors thank Professor Staszek Radziszowski for taking great care in reading our manuscript and making useful suggestions.

\end{document}